\newtheorem{theorem}{Theorem}[section]
\newtheorem{definition}{Definition}[section]
\newtheorem{corollary}{Corollary}[section]
\newcommand{\eat}[1]{}
\newcommand{\ignore}[1]{}
\def\CU{\mathcal{U}}
\def\CM{\mathcal{M}}
\def\CT{\mathcal{T}}
\def\CB{\mathcal{B}}
\def\FB{\mathfrak{B}}
\def\pubfl{\mbox{puBFMap}}
\DeclareRobustCommand{\edit}[1]{{\sethlcolor{Yellow}\hl{#1}}}
\def\BibTeX{{\rm B\kern-.05em{\sc i\kern-.025em b}\kern-.08em
    T\kern-.1667em\lower.7ex\hbox{E}\kern-.125emX}}
\begin{document}
\title{Dynamic Partition Bloom Filters: A Bounded False Positive Solution For Dynamic Set Membership\\
(Extended Abstract)
}

\author{Sidharth Negi\footnote{Department of Computer Science and Engineering, IIT Delhi.} \and Ameya Dubey\footnotemark[1] \and Amitabha Bagchi\footnotemark[1] \and  Manish Yadav\footnotemark[1] \and Nishant Yadav\footnote{College of Information and Computer Sciences, Univ. of Massachusetts, Amherst.} \and Jeetu Raj\footnote{University of Illinois, Urbana-Champaign.}}

\maketitle
\thispagestyle{plain}
\pagestyle{plain}

\begin{abstract}
	Dynamic Bloom filters (DBF) were proposed by Guo et. al. in 2010 to tackle the situation where the size of the set to be stored compactly is not known in advance or can change during the course of the application. We propose a novel competitor to DBF with the following important property that DBF is not able to achieve: our structure is able to maintain a bound on the false positive rate for the set membership query across all possible sizes of sets that are stored in it. The new data structure we propose is a dynamic structure that we call Dynamic Partition Bloom filter (DPBF). DPBF is based on our novel concept of a Bloom partition tree which is a tree structure with standard Bloom filters at the leaves. DPBF is superior to standard Bloom filters because it can efficiently handle a large number of unions and intersections of sets of different sizes while controlling the false positive rate. This makes DPBF the first structure to do so to the best of our knowledge. We provide theoretical bounds comparing the false positive probability of DPBF to DBF. Our extensive experimental analysis demonstrates that our proposed structure takes up to three orders of magnitude lower time than DBF to process queries while keeping the false positive probability bounded unlike DBF.
\end{abstract}


\section{Introduction}

Bloom filter (BF)~\cite{bloom1970space} and its several static variants are widely used for compact set representation and efficient membership queries. They achieve compact representation at the cost of false positives in membership queries. The application designer has to decide an acceptable threshold for this false positive rate and provision the BFs in advance to ensure that the threshold is not violated. These structures allow efficient membership query, set intersection and union operations. However, once the set is stored in the BF, it is no longer possible to make any changes which leads to problems in highly dynamic scenarios.

The sets often need to be stored in highly dynamic scenarios, and applications such as Bloom joins on distributed databases~\cite{mackert1986bloomjoin, guo2006theory}, and informed routing and global collaboration in unstructured P2P networks~\cite{kubia2000oceanstore, ledlie2002organize, acuna2003planetp} may require storing sets that differ greatly in size. Also, it may not always be possible to have the knowledge of the set size in advance. In such scenarios, choosing the right size for the Bloom filter poses a significant challenge. A large Bloom filter size causes unnecessary space overhead, while small Bloom filters lead to undesirably high false positive rates. Moreover, many applications \cite{broder2004network, guo2006theory} may also require intersection and union operation on sets stored in the Bloom filters. \\

With these issues in mind Guo et. al. proposed the Dynamic Bloom filter (DBF) in \cite{guo2010dynamic}. Rather than a single bit-array, the DBF is a list of Standard Bloom Filters. The false positive rate of each SBF used as a unit in the DBF is maintained below a pre-defined threshold. The DBF starts as a list containing a single unit Bloom Filter and as elements of the set are inserted into the DBF, they get populated into the last Bloom filter in the list. When the estimated false positive probability of the last Bloom filter reaches the pre-defined threshold, another empty unit Bloom filter is appended to the end of the list. This method works well to handle varying set sizes but there is a critical flaw: the false positive rate increases {\em linearly} with the set size. This fact was not highlighted in Guo et. al.'s paper, but we provide a mathematical proof for this fact (see Appendix~\ref{sec:theo_ana}).



\begin{figure}[h!]
\centering
\begin{minipage}{0.48\textwidth}
		\includegraphics[scale=0.6]{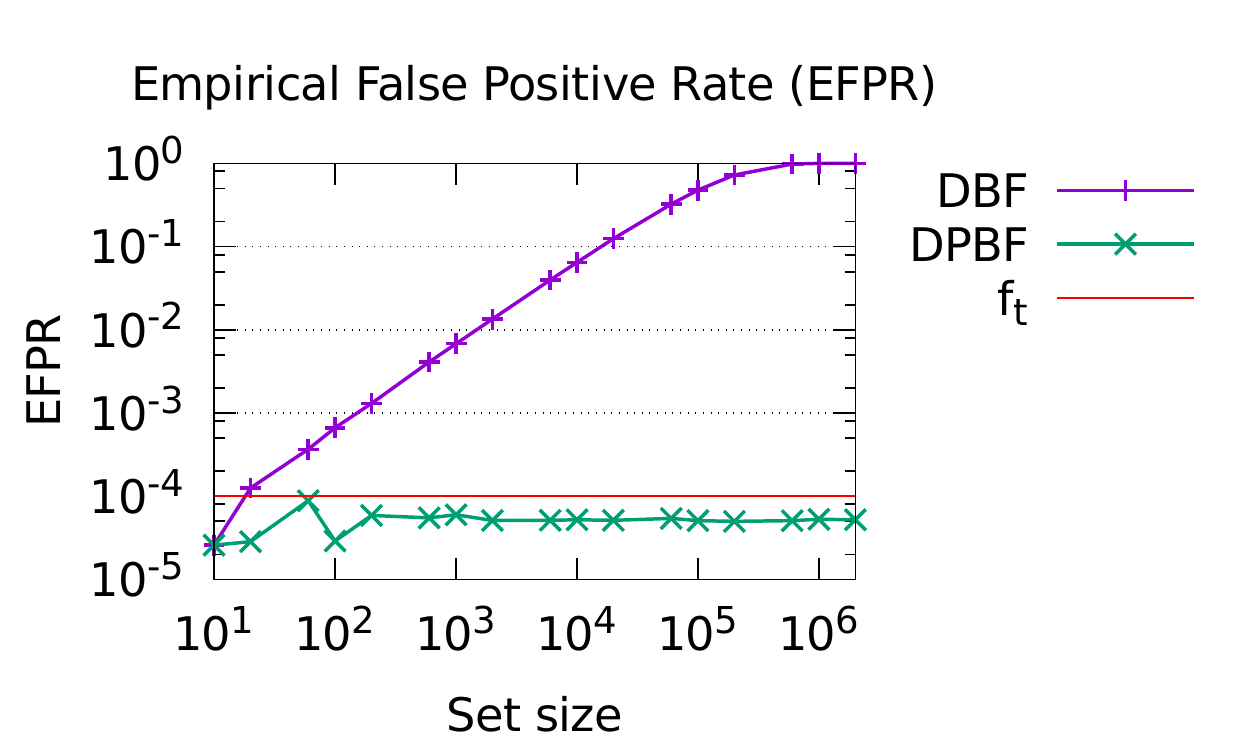}
		\caption{DBF exceeds the threshold ($p_t=10^{-4}$) even for smaller sets.}
		\label{fig:intro_dbf1}
\end{minipage}\hfill
\begin{minipage}{0.48\textwidth}
		\includegraphics[scale=0.6]{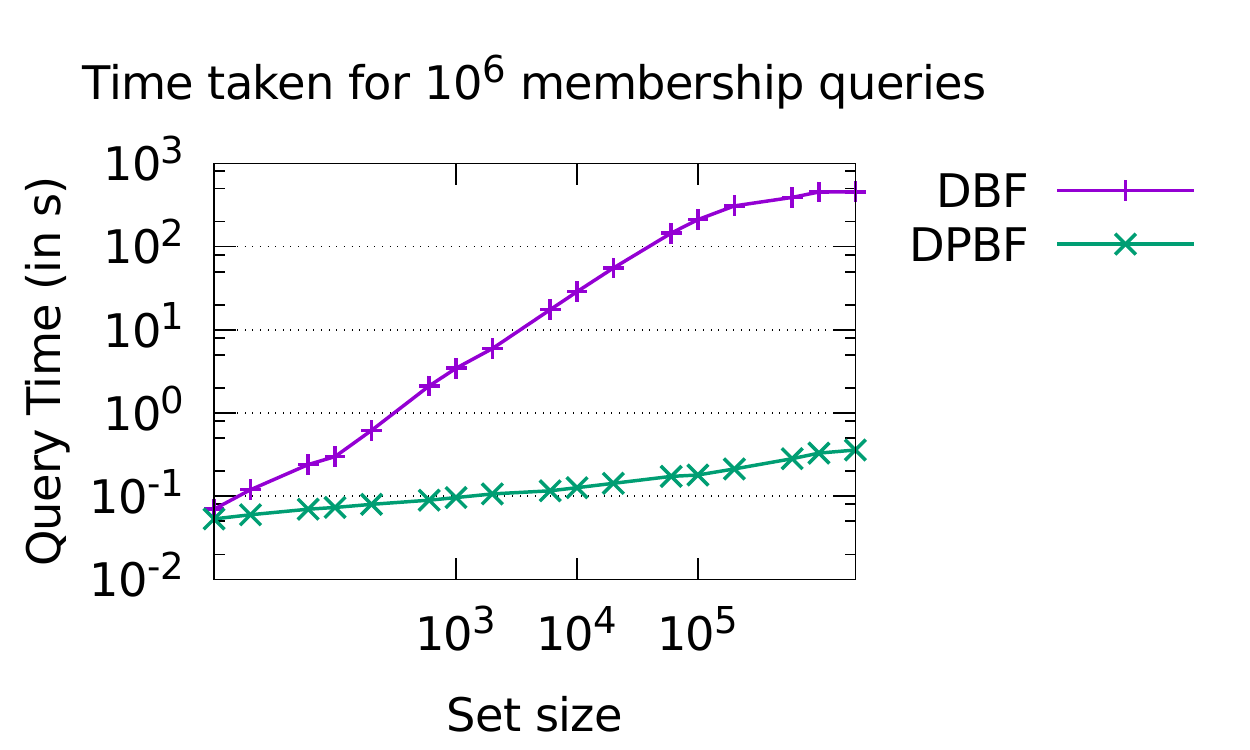}
		\caption{DBF takes over 2 orders of magnitude more time than DPBF.}
		\label{fig:intro_dbf2}
\end{minipage}
\label{fig:intro_dbf}
\end{figure}

Our proposal the {\em Dynamic Partition Bloom Filter} (DPBF) is designed to overcome this shortcoming: Given a desired bound on the false positive rate by the application designer, DPBF can maintain it no matter what the size of the set stored in it, and no matter how widely this size varies through the life of the application. In Figure~\ref{fig:intro_dbf1} we demonstrate this effect by comparing the false positive rate across DPBF and DBF. We set a false positive threshold of $10^{-4}$ and vary set sizes across six orders of magnitude and empirically measure the false positive achieved by the two structures. DPBF holds its line while DBF shows a linear increase, dominating DPBF even for small sizes. This experiment is presented in greater detail in the full version of this paper which is under submission.

Despite being able to maintain false positive rate, DPBF performs very well compared to DBF on query time as well. In Figure~\ref{fig:intro_dbf2} we see that the query time for DPBF is 2 orders of magnitude better than DBF. The reason for this is that the DPBF is a binary tree-like structure with unit Standard Bloom Filters at the leaves. As such it is natural that it has a great query advantage over the linear structured DBF. Creating and maintaining this structure requires more space than DBF takes but in the full version of this paper we show that the extra space committed is not significant.

This extended abstract is an early announcement of our work in which we describe our new data structure. A fuller version of this paper that contains implementation details and proofs is under submission.


\ignore{

\paragraph*{Our Contributions}
We propose a binary tree-based compact data structure called the Dynamic Partition Bloom Filter and describe how to efficiently perform all set operations on it. Moreover in Section \ref{sec:theo_ana}, we provide a lower bound on the FP probability of DBF \eat{The first theorem compares the false positive probability of DPBF against SBF when allocated the same amount of memory.}and show why DBF crosses the false positive probability threshold while DPBF does not. We show (Appendix~\ref{sec:theo_ana}) that the false positive rate of DBF $f_D$ is related to the set threshold rate $f_t$ as
	\[ f_D \geq \Big \lceil \frac{|A|}{n_t} \Big \rceil f_t\]
	which shows that asymptotically, the false positive rate $f_D$ increases linearly with the number of elements stored $|A|$, and so it does not remain bounded. This theoretically establishes the need for DPBF over DBF in dynamic settings.

We provide an extensive and thorough experimental study of DPBF in comparison with DBF and SBF on a real data set taken from Twitter, and by simulating a Bloom join application. Our goal is to comprehensively demonstrate empirically the fundamental advantages of DPBF over its competitor both at the level of operations and queries and in a possible application setting. Specifically:
\begin{enumerate}
	\item We first demonstrate the performance of DPBF for the Union and Intersection operations, and compare it against SBF and DBF, in Section \ref{sec:SetOps}.
	\item We then compare DPBF against SBF and DBF in Section \ref{sec:experiments} on a variety of metrics, namely construction times, false positive probability, query times and memory usage.
	\item Finally, in Section \ref{sec:bje}, we explore the possibility of using this structure in the application scenario of Bloom Join, through a simulation-based comparison with SBF and DBF, and evaluate the utility of DPBF in Bloom Join for different simulation parameter settings.
\end{enumerate}


The rest of this paper is organized as follows. We first discuss the background and related work in Section \ref{sec:related} and then describe DPBF in Section \ref{sec:dpbf}. Theoretical analysis of DBF versus DPBF is discussed in Section \ref{sec:theo_ana}. Our experiments and their results are presented in Section~\ref{sec:experiments}. We finally conclude our work with a theorem in the Appendix \ref{sec:theo_ana}.
}

\section{Dynamic Partition Bloom Filters}
\label{sec:dpbf}

We now define the Dynamic Partition Bloom filter in detail. But to do so we define a tree-based hierarchical partitioning scheme for the namespace that we call a partition tree and explain how to add Bloom filters to this scheme in preparation for our main definition.

\subsection{Preliminaries: Partition Trees}
\label{sec:dpbf:partition}

Assume for simplicity of exposition that our namespace $\mathcal{U}$ contains ids $\{0, 1, \ldots, |\mathcal{U}|-1\}$. In the case of a general namespace we can always map that namespace to contiguous integers beginning with 0.
\begin{definition}
\label{def:pt}
The {\em Partition Tree of depth $d$} associated with $\mathcal{U}$ is a complete binary tree $\CT_d(\mathcal{U})$ of depth $d$ and a mapping of nodes of this tree to subsets of $\mathcal{U}$. To describe the mapping, let us say that the $j$th node of level $i$ of $\CT_d(\CU)$ is $N_{i,j}$. Then the subset of $\CU$ associated with $N_{i,j}$ is 
\[\CM_{i,j} = \left\{\ell\ : j \cdot \frac{|\CU|}{2^i} \leq \ell < (j + 1) \cdot \frac{|\CU|}{2^i}\right\}. \]
\end{definition}
We note that all the subsets of $\CU$ associated with a given level of the Partition Tree are equal in size and form a partition of $\CU$. Further the subsets associated with the two children of each internal node of $\CT_d(\CU)$ form a partition of the subset associated with that node. In other words, the partition at level $i+1$ refines the partition at level $i$ all the way down to the leaf level. The partition at the leaf level is the most fine grained while the partition at the root contains only one set, the entire namespace $\CU$.

Now we associate SBFs with the Partition Tree to give us what we call a Bloom Partition Tree. We need an additional parameter here: the FPR $f$ that we want to maintain. We also note that given a target FPR $f$, if we want to insert a set of size $n$ into an SBF with $k$ hash functions then from (\ref{eq:SBF-FPR}) we can back calculate the number of bits we need to allocate as
\begin{equation}
    m(n,k,f) = \left\lceil -\frac{nk}{\ln{(1 - f^{\frac{1}{k}})}} \right\rceil
    \label{eq:2}
\end{equation}
With this in hand we are ready to define the Bloom Partition Tree.
\begin{definition}
Given a namespace $\CU$, and a target FPR of $f$, the {\em Bloom Partition Tree of depth $d$} associated with $\CU$ that maintains target FPR $f$ is a Partition Tree of depth $d$, $\CT_d(\CU)$ along with a homogenous set of $2^d$ SBFs, $\FB= \{\CB_0, \ldots,\CB_{2^d -1}\}$, with $\CB_i$ associated with leaf $N_{d,i}$ of $\CT_d(\CU)$. If these SBFs have $k$ hash functions associated with them then the number of bits allocated to each SBF is $m(|\CU|/2^d, k, f)$ where the function $m(\cdot,\cdot,\cdot)$ is as defined in (\ref{eq:2}). 
\end{definition}
We will refer to the $2^d$ SBFs of the Bloom Partition Tree of depth $d$ associated with $\CU$ as the {\em unit Bloom Filters} ({\bf uBF}) of the Bloom Partition Tree. 

We are now ready to present the Dynamic Partition Bloom Filter structure, but before we do so we note that the Partition Tree and Bloom Partition Tree defined above are {\em not} to be stored in memory. These are defined here to help understand the DPBF and what it actually stores in memory.

\subsection{Definition: Dynamic Partition Bloom Filters}
\label{sec:dpbf:dpbf}

The DPBF comprises two substructures, a hash map of populated SBFs, that we call the {\em Populated Unit Bloom Filter Map} ({\bf puBFMap}) and a compressed version of the Bloom Partition Tree with populated SBFs at its leaves that we call the {\em Compressed Populated Bloom Partition Tree} ({\bf CPBPT}). Membership queries are answered by the CPBPT while the puBFMap is used to realise the insertion, union and intersection operation. 

\paragraph{Populated Unit Bloom Filter Map}
An SBF $\CB$ is said to be {\em populated} if some set $S$ has been stored in it. We denote an SBF $\CB$ populated with set $S$ by $\CB(S)$.
\begin{definition}
Given an $A \subseteq \CU$ and a Bloom Partition Tree $\CT_d(\CU) \cup \FB$, where $\FB = \{ \CB_0, \ldots, \CB_{2^d-1}\}$, the {\em Populated Unit Bloom Filter Map} of $A$ is a set of populated SBFs
\[ \pubfl(A) = \{ \CB_i(A \cap N_{d,i}) : 0 \leq i \leq 2^d -1 , A \cap N_{d,i} \ne \emptyset \}.\]
\end{definition}
To restate the definition in plain language: We intersect the set $A$ with all the subsets of $\CU$ defined by the most fine grained partition of the BPT, i.e. the partition at the leaves. All the non-empty intersections are stored in unit Bloom Filters and this makes up the puBFMap of $A$.

\begin{figure}[htbp]
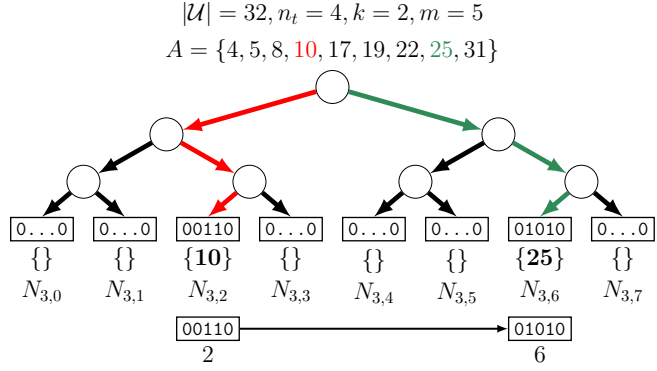

\begin{center}
  \includestandalone[width=0.5\textwidth]{figures/Full_tree}
\end{center}
 \caption{Bloom Partition Tree being populated by elements of $A$, and the corresponding respresentation, puBFMap, which is actually stored in memory.}
  \label{fig:pBF-tree-list}
\end{figure}

We illustrate the concept of the puBFMap with an example. In Figure~\ref{fig:pBF-tree-list} we consider a case where $|\CU| = 32$. A Bloom Partition Tree of depth 3 is being used, so $n_t = 4$ and there are 8 uBFs at the leaves. Each uBF has 5 bits in it and has 2 hash functions associated with it. We populate this BPT with the set  $A=\{4,5,8, 10, 17, 19, 22, 25, 31\}$.  We see that each element $x \in A$ is inserted into the unit Bloom filter of the leaf node corresponding to its partition. For example, consider the element 10. We find the partition to which the element 10 belongs, as represented by the red path. Using Definition \ref{def:pt}, \[2 \cdot \frac{32}{2^3} \leq 10 < (2 + 1) \cdot \frac{32}{2^3}\] so the element $10$ is inserted into the leaf with $j = 2$ of the last level, i.e. Bloom filter at $N_{3,2}$.

\paragraph{Compressed Populated Bloom Partition Tree}
We now turn to the CPBPT. Given a set $A$ and a membership query for some $x \in \CU$, we can see that the Bloom Partition Tree can act as a Binary Search Tree and guide us to the leaf $N_{d,i}$ such that $x \in \CM_{d,i}$. Now if $\pubfl(A)$ contains a populated version of $\CB_i$ then an SBF query can reveal whether $x$ is in $A$ or not. If $\pubfl(A)$ does not contain $\CB_i$ then a negative answer can be given directly. However, the input elements may be distributed over the namespace in such a way that each uBF stores only a few elements, thus wasting a lot of memory per uBF. So, we store a compressed version of this structure which we call CPBPT. 

We now define this structure. But first we introduce some notation: Given an SBF $\CB$ with $m$ bits and an FPR $f$, the {\em target population} $n_t$ is the maximum number of elements that can be stored in $\CB$ while maintaining an FPR of at most $f$. Note that $n_t$ can be calculated from (\ref{eq:2}) by placing the given value of $m$ on the LHS and solving for $n$. Also observe that if we are working with a BPT of depth $d$ for a given FPR $f$, we have chosen the size of the uBFs such that $n_t(f) = |\CU|/2^d$, i.e., $n_t(f)$ is the size of the maximum possible set that can be stored in any uBF.

\begin{definition}
Given a set $A$, a target FPR $f$ and a BPT $\CT_d(U)$, the {\em Compressed Bloom Partition Tree} is a obtained by associating uBFs with the leaves of a subtree $\CT_d(U,A,f)$ of $\CT_d(U)$ defined as follows:
\begin{itemize}
\item $N_{0,0}$ is the root of $\CT_d(U,A,f)$.
\item For all $i > 0$ and $0 \leq j < 2^i$, $N_{i,j}$ is a leaf of $\CT_d(U,A,f)$ if $|\CM_{i,j} \cap A| \leq n_t(f)$ but $|\CM_{i-1,\lfloor j/2\rfloor} \cap A| > n_t(f)$
\end{itemize}
If $N_{i,j}$ is a leaf of $\CT_d(U,A,f)$ we associate uBF $\CB_{i,j}$ of size $m(n_t(f),k,f)$ with $N_{i,j}$ and populate it with the set $\CM_{i,j} \cap A$.
\end{definition}
The easiest way to understand the CPBPT is algorithmically: Create the puBFMap of $A$ by populating the uBFs at the leaf level of the BPT. We are guaranteed that each leaf node has at most $n_t(f) = |\CU|/2^d$ elements associated with it. If a leaf and its sibling together still have at most $n_t(f)$ elements we can merge them into their and maintain a single uBF that stores the elements associated with the union. This process can continue till we reach a compressed version of the BPT with the property that every internal node has the property that the number of elements of $A$ associated with its subset of $\CU$ exceeds $n_t(f) = |\CU|/2^d$ and every leaf has the property that the number of elements of $A$ associated with its subset of $\CU$ are at most $n_t(f)$.

\begin{figure}[htbp]
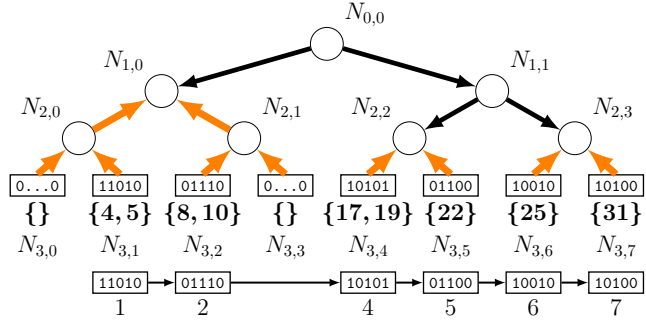

\begin{center}
  \includestandalone[width=0.5\textwidth]{figures/Compression}
\end{center}
  \caption{Compressing the BPT after population.}
  \label{fig:compression}
\end{figure}
In Figure~\ref{fig:compression} we return to the example introduced in Figure~\ref{fig:pBF-tree-list} to illustrate the compression process. Since $|A \cap \CM_{1,0}| = 4 \leq 4$, the entire subtree rooted at $N_{1,0}$ is compressed into its root. Similarly the two children of $N_{2,2}$ can be compressed into $N_{2,2}$ since $|A \cap \CM_{2,2}| = 3 \leq 4$ and the two children of $N_{2,3}$ can be compressed into $N_{2,3}$ since $|A \cap \CM_{2,3}| = 2 \leq 4$ but the nodes $N_{2,2}$ and $N_{2,3}$ can't be merged since their parent's subset of the namespace, $\CM_{1,1}$ has an intersection of size $5 > 4$ with $A$.

\begin{figure}[htbp]
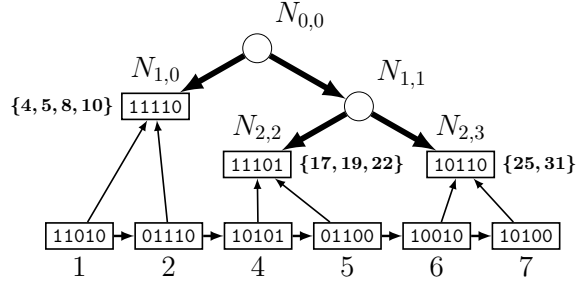

  \centering
  \includestandalone[width=0.45\textwidth]{figures/Snapshot}
  \caption{The final DPBF, which consists of the CPBPT and the puBFMap. Note the added pointers from the puBFMap nodes to the corresponding leaves in the CPBPT.}
  \label{fig:snapshot}
\end{figure}
Finally in Figure~\ref{fig:snapshot} we see the composite DPBF comprising the puBFMap and the CPBPT for this example. 

\ignore{
\paragraph{Implementation Optimizations}
\edit{Since each uBF is associated with a leaf of the BPT, we can assign
each uBF an id between $0$ and $2^d-1$ according to the leaf it is
associated with.  We store the ids, {\em only} the ids, of the uBFs
that are currently present in the puBFList in an AVL tree. Along with the id we store a pointed to the actual node of the puBFList which is a linked list of uBFs that we maintain at all times. The role of this AVL tree will become clear when we describe our operations in Section}~\ref{sec:operations}.

We also link the puBFMap and the CPBPT by storing a pointer at each node in the puBFMap to the leaf of the CPBPT into which it has been merged during the compression phase. We call this the CPBPT leaf pointer of the puBFMap node.

\edit{Note that to travel in the opposite direction, i.e., from a CPBPT leaf to the uBFs in the puBFList corresponding to it we simply need to query the AVL tree with the lower end of the range associated with the CPBPT leaf node and then travel through the puBFList till we reach the end of the associated range. To illustrate this point let us say that we want to find the uBFs associated with $N_{1,0}$ in Figure}~\ref{fig:snapshot}. \edit{The lower end of the range associated with $N_{1,0}$ at the leaf level is 0 and the upper end of the range is 3. We search for 0 in the AVL tree and find that it is not there but the AVL tree search return 1 as the smallest value greater than 0 that is present in the AVL tree. Since 1 is $\leq 3$ we pick up 1 and traverse further picking up 2. The next node in the list is 4 which exceeds 3 so we stop and we have our sublist of the puBFList that corresponds to $N_{1,0}$.}
}

\paragraph{Space Complexity of DPBF}
For convenience of notation, we define
\begin{quote}
$s$: the number of uBFs in the puBFMap of the DPBF. 
\end{quote}
If we store a set $S$ in the DPBF built on a BPT of depth $d$ then 
\begin{equation}
\label{eq:s-lb}
s \geq \left\lceil \frac{|S| 2^d}{|\CU|}\right\rceil,
\end{equation}
since each uBF is allowed to store at most $|\CU|/2^d$ element of $S$. If $s$ is equal to the lower bound then the puBFMap has exactly the same number of Bloom Filters as DBF would have if DBF used Bloom Filters of the same size. In the worst case, however, $s$ can be as large as $2^d$ but since in most applications we store sets that are typically orders of magnitude smaller than the name space, we expect $s$ to be quite small. 

Note that $s$ is a natural upper bound on the number of uBFs in the CPBPT. Although the number of uBFs in the CPBPT could be far fewer than $s$ it is also possible, in the worst case, that if the uBFs of the puBFMap are highly populated then the number of uBFs in the CPBPT is also exactly $s$. Hence the total number of uBFs allocated is at most $2s$, each of size $m(|\CU|/2^d,k,f)$.

Additionally the internal nodes of the CPBPT are at most the number of leaves of the CPBPT since each internal node of the CPBPT has exactly 2 children, i.e., these take $O(s)$ space since the number of leaves of CPBPT can be at most $s$.

Hence, using (\ref{eq:2}), in terms of its parameters, we have in asymptotic terms that the space used by DPBF to store a set $S$ using $s$ uBFs is
\[\boxed{\theta \left(s \cdot \left\lceil -\frac{|\CU|\cdot k}{2^d \cdot \ln{(1 - f^{\frac{1}{k}})}} \right\rceil\right)}\]
bits. The quantity $s$ is a property of the set being stored in the DPBF. In the worst case if the elements of the set are distributed uniformly across the namespace $s$ can be as large as $2^d$. In practice however this is not the case. In the full version of the paper we show through experiments that the space taken by the DPBF is not significantly larger than that taken by DBF on real data, which is what we would expect if $s$ is close to the lower bound given in (\ref{eq:s-lb}).

\ignore
{
\paragraph{Parameter selection: overview}
The DPBF is intended for application settings where the FPR needs to be kept bounded. As such we assume that the namespace $\CU$ and target FPR $f$ are input parameters provided by the application itself. 
The parameters that we need to set in order to use the DPBF for the application are:
\begin{itemize}
\item $m$: the number of bits allocated to uBFs.
\item $k$: the number of hash function used by the uBFs.
\item $n_t(f)$: the target allocation of any uBF, i.e., the maximum size of a set we are allowed to store in a uBF without exceeding FPR $f$ for that uBF.
\item $d$: the depth of the Bloom Partition Tree
\end{itemize}
All four of these parameters are interrelated. If we fix a depth $d$ then $n_t(f)$ automatically gets fixed at $|\CU|/2^d$ and, given a value of $k$, the correct number of bits to be allocated to a uBF is $m(|\CU|/2^d, k,f)$ as given in (\ref{eq:2}). If we fix $k$ then we see that since $\CU$ and $f$ are fixed already by the application, $m$ is proportional to $1/2^d$. This implies that to minimize the size of the uBF we should choose the maximum possible depth of $\log |\CU|$. But the total storage commitment also depends on the number of uBFs used in our structure, and we need to account for \edit{the time taken by our operations which will clearly grow as depth grows}. So we postpone a fuller discussion on parameters till we have described how the main operations of the DPBF work.
}

%
%
\ignore{
\subsection{Definitions}
\noindent
\textbf{Partition Function:} A partition function $p : \mathcal{X} \rightarrow \{0,1\}$ maps half the elements of $\mathcal{X}$ to 0 and the other half to 1. These halves are disjoint (by definition), cover all of $\mathcal{X}$ between them, and one of these halves contains one more element than the other when $|\mathcal{X}|$ is finite and odd.
\\
\\
\noindent
\textbf{Partition String:} Given a finite universe $\mathcal{U}$ and a finite ordered set of partition functions which is denoted as $\mathcal{P} = \{ p_i: \mathcal{U} \rightarrow \{ 0,1\}\ |\ 0\leq i < d\}$, for $0 \leq l < d$  
\[s_l: \mathcal{U} \rightarrow \lbrace 0,1 \rbrace^l, \quad s_l(x) := p_0(x) \ldots p_{l-1}(x) \]
These $d$ functions are the partition string functions. $s_l(x)$ is the $l$\textsuperscript{th} partition string for the element $x$.
\\
\\
\noindent
\textbf{Partition Index:} Similar to the partition string functions, we have $d$ partition index functions $\text{\emph{index}}_l : \mathcal{U} \rightarrow \mathbb{N}$, where $\text{\emph{index}}_l(x)$ -- the $l$\textsuperscript{th} partition index for the element $x$ -- is the integer represented by $s_l(x)$ in the binary numeral system. 

\subsection{Overview}
The Dynamic Partition Bloom Filter (DPBF) can keep the FPR bounded below a pre-defined threshold, even for dynamic sets where the set size can not be estimated in advance. This structure is characterized by the following parameters.

\begin{itemize}
    \item $\mathcal{U}$: the universe is the set of all possible input elements.
    \item $f_t$: the FPR rate threshold
    \item $\mathcal{H}$: a set of hash functions used by its unit Bloom filters
    \item $d$: the depth of the partition tree
\end{itemize}
We describe the process of parameter selection after describing the DPBF operations in the next section. We now introduce the auxiliary data structures and then define DPBF on their terms.\\
\\
\noindent
\textbf{Partition Tree:} A partition tree is a perfect binary tree of depth $d$. Associated with each node $N_{i,j}$ (the $j$\textsuperscript{th} node in the tree from the left at level $i$) of the tree is a set called the \emph{namespace}. This \emph{namespace} (represented by $\mathcal{M}_{i,j}$ for the node $N_{i,j}$) is a subset of the universe $\mathcal{U}$, and is recursively defined as follows:
\begin{itemize}
    \item The namespace associated with the root node, $N_{0,0}$, is the full universe $\mathcal{U}$.
    \[ \mathcal{M}_{0,0} = \mathcal{U}\]
    \item For any parent node, the namespaces of the left and right children are determined by the partition function at the parent's level. More formally, let the namespace associated with a node $N_{i,j}$ at level $i$ be $\mathcal{M}_{i,j}$. The namespace of its left child is the ``left partition'' of $\mathcal{M}_{i,j}$, i.e., all those elements $x$ in $\mathcal{M}_{i,j}$ such that $p_i(x) = 0$. Similarly, the namespace for the right child is the ``right partition'' of $\mathcal{M}_{i,j}$, for which $p_i(x) = 1$.
    \begin{alignat*}{2}
        &\mathcal{M}_{i+1,2j} \;&= \{ x \in \mathcal{M}_{i,j}\ |\ p_i(x) = 0\} \\
        &\mathcal{M}_{i+1,2j+1} \;&= \{ x \in \mathcal{M}_{i,j}\ |\ p_i(x) = 1\}
    \end{alignat*}
    Note that $\mathcal{M}_{i+1,2j}$ and $\mathcal{M}_{i+1,2j+1}$ are disjoint.
\end{itemize}
For each level $0 \leq i < d$, the partition function $p_i$ is chosen such that : 
\[ ||\mathcal{M}_{i+1, 2j}| - |\mathcal{M}_{i+1,2j+1}|| \leq 1 \quad \forall\ j \in \{0, \ldots, 2^i - 1\}\]
Counting the leaf nodes of the partition tree from left to right, $\text{\emph{index}}_d(x)$ will be the index of the leaf node which stores the element $x$. This is known as the partition index of $x$ and of its leaf. Also, the path from the root of the partition tree to this leaf is given by $s_d(x)$.\\
\\
\noindent
It can be inductively shown from the above two points that
\begin{corollary}
\label{cor:pt}
Any element in the universe is in the namespace of exactly one leaf, and the union of the namespaces of all the leaves in any subtree of the partition tree is the namespace of the root of the subtree.
\end{corollary}
\noindent
\textbf{Bloom Partition Tree (BPT):} A Bloom partition tree is simply a partition tree with Bloom filters (called unit Bloom filters) on all the leaf nodes, each with an FPR threshold $f_t$, and designed to store the entire namespace associated with the leaf for this threshold. This implies that each leaf's namespace has $n_t$ elements, and from corollary \ref{cor:pt} the union of all these disjoint namespaces must contain the entire universe. So,
\[
    n_t \cdot 2^{d-1} < |\ \mathcal{U}\ | \leq n_t \cdot 2^d \\
\]
which gives us $n_t = 2^{\lceil \log_2{|\mathcal{U}|} - d \rceil}$. The size (number of bits) $m$ in each unit Bloom filter is then given by (\ref{eq:2}), where $n_t$ is as calculated above, and $k = |\mathcal{H}|$. \\
\\
\noindent

\textbf{Populated Bloom Partition Tree:} To store a set $A \subseteq \mathcal{U}$, we take each element $x \in A$ and find the leaf whose namespace has the element $x$. To this end, we get the partition string $s_d(x)$. This string gives us a path from the root to the relevant leaf in the BPT - and we insert the element in the unit Bloom filter in this leaf. A BPT whose unit Bloom filters are thus populated by the elements of $A$ is called a populated BPT. Since the namespace associated with each leaf in the BPT is exactly $n_t$, no more than $n_t$ elements will ever be stored in any leaf, so, upon querying any unit BF the FPR will be maintained for each leaf, and so for the entire structure.\\
\\

\noindent
\textbf{Compressed Populated Bloom Partition Tree (CPBPT):} The compression operation on the populated BPT, described in the nect section, gives us the compressed populated Bloom partition tree (CPBPT). Basically, we recursively merge sibling leaves of the populated BPT and replace the parent node of these leaves with this merger, making it a new leaf. We ensure that the leaves are only merged when the number of elements in the newly formed leaf is $\leq n_t$. This preserves the property of the populated BPT that the number of elements in any leaf is at most $n_t$, and so the bound on FPR for each leaf is maintained. The new leaf retains the namespace of the erstwhile parent node. The CPBPT may no longer be a perfect binary tree; however, it is still a full binary tree, with depth possibly less than $d$. The recursive definition of the namespace in the partition tree, also holds for CPBPT.\\
\\
\noindent
\textbf{Populated Unit Bloom Filter Map (puBFMap):} This structure (puBFMap) is a space efficient way of storing the populated BPT. Essentially, it is an hash map of all the leaf nodes with non-empty unit Bloom filters. Associated with each uBF in this hash map is the partition index, $\text{\emph{index}}_d(x)$, which is the same for all $x$ in this leaf's namespace, and this index uniquely identifies the partition of the universe that this uBF is tasked to store. The number of uBFs in the hash map is now $\mathcal{O}(|A|)$, where $A$ is the set to be stored - which is typically much smaller than $|\mathcal{U}|$.\\
}
%
%


\appendix

\section{The false positive probability of DBF grows linearly}
\label{sec:theo_ana}
\begin{theorem}
Let $\mathcal{BF_{DP}}(A)$ be DPBF with FPR threshold $f_t$, and $\mathcal{BF_{D}}(A)$ be DBF storing set $A \subseteq \mathcal{U}$ where both have unit Bloom filter of size $m$, $k$ hash functions, and at most $n_t = -\frac{m}{k}\ln \Big( 1 - f_t^{\frac{1}{k}} \Big)$ elements in any unit Bloom filter. The effective FPR $f_D$ in $\mathcal{BF_{D}}(A)$ is such that,
\[ f_D \geq 1 - (1 - f_t)^{\lfloor \alpha \rfloor} \approx \lfloor \alpha \rfloor f_t\]
where, $\alpha = \frac{|A|}{n_t} $
\end{theorem}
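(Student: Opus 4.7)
The plan is to exploit the structure of DBF directly: it is a linear list of unit Bloom filters, each populated until its own false positive rate reaches the prescribed threshold $f_t$ (equivalently, until it holds $n_t$ elements), at which point a fresh unit BF is appended. A membership query is answered by querying every unit BF in the list in turn, and the DBF answers ``yes'' if at least one unit BF answers ``yes''. So the proof reduces to a clean product computation over independent BF queries.

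First I would make precise the ``filling'' semantics: after inserting $|A|$ elements with the convention that no unit BF ever holds more than $n_t$ elements, the list must contain at least $\lfloor |A|/n_t \rfloor = \lfloor \alpha \rfloor$ unit BFs that are \emph{completely full} (hold exactly $n_t$ elements), plus possibly one additional partially-filled unit BF. Each of these completely full unit BFs has false positive rate exactly $f_t$ by the very definition of $n_t$ given in the statement (this is the equation used to derive $n_t$ from $m,k,f_t$).

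Next I would write the effective false positive rate of DBF for a query element $x \notin A$. Because each unit BF in the DBF list is built with independent hash functions, the events ``unit BF $i$ returns \emph{yes} on $x$'' are mutually independent across $i$. Hence
\[
f_D \;=\; 1 - \prod_{i} (1 - f_i),
\]
where $f_i$ is the FPR of the $i$th unit BF. Restricting the product to just the $\lfloor\alpha\rfloor$ completely full unit BFs (each contributing a factor $1-f_t$) and bounding the remaining factor $(1 - f_{\mathrm{last}}) \leq 1$ from above yields
\[
f_D \;\geq\; 1 - (1-f_t)^{\lfloor \alpha \rfloor},
\]
which is the bound in the theorem. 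The approximation $1-(1-f_t)^{\lfloor\alpha\rfloor}\approx \lfloor\alpha\rfloor f_t$ then drops out of a first-order Taylor expansion of $(1-f_t)^{\lfloor\alpha\rfloor}$ around $f_t=0$, valid in the regime $\lfloor\alpha\rfloor f_t \ll 1$ that is of interest for bounded-FPR applications.

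I do not expect any serious obstacle: the only slightly delicate points are (i) being explicit that ``at most $n_t$ elements per unit BF'' combined with DBF's append-only policy forces at least $\lfloor\alpha\rfloor$ fully populated unit BFs, and (ii) justifying the independence assumption across unit BFs, which follows from DBF's use of independent hash functions per unit BF. Once these are in place the computation is a one-line application of the complement-of-product formula together with Bernoulli's inequality for the linearization.
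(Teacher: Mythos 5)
Your proposal is correct and follows essentially the same route as the paper's proof: identify the $\lfloor \alpha \rfloor$ fully populated unit Bloom filters (each with FPR exactly $f_t$ by the definition of $n_t$), express the DBF's no-false-positive probability as a product over independent unit BF queries, drop the factor for the partially filled last filter by bounding it above by $1$, and linearize $(1-f_t)^{\lfloor \alpha \rfloor}$ to obtain the approximation. Your treatment is if anything slightly cleaner, since explicitly invoking ``at least $\lfloor \alpha \rfloor$ full filters'' and bounding all remaining factors by $1$ avoids the paper's separate handling of the last filter and its case split on $\alpha < 1$.
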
 
\begin{proof}
For $\alpha < 1$, the theorem trivially holds as $\lfloor \alpha \rfloor = 0$. Also, $\mathcal{BF_{DP}}(A)$ and $\mathcal{BF_{D}}(A)$ consists of just 1 unit Bloom filter containing all $n_A = |A| (\leq n_t)$ elements of set $A$, and thus both have equal FPR.\\
For $\alpha \geq 1$, $\mathcal{BF_{D}}(A)$ will contain $\lceil \alpha \rceil$ unit Bloom filters its list. All unit Bloom filters but the last contain $n_t$ elements in them. So, FPR of each unit bloom filter but the last is given by ,
\[ f_{unit} = \Big(1 - \exp\big(\frac{-n_t k}{m}\big)\Big)^k = f_t \] 
The last unit Bloom filter in the list contain only $n_A - \lfloor \alpha \rfloor n_t $, hence its FPR is given by,
\[f_{last} = \Big( 1 -  \exp\bigg(  \frac{-(n_A - \lfloor \alpha \rfloor n_t)k }{m}\bigg) \Big)^k  \]
Membership queries in $\mathcal{BF_{D}}(A)$ are answered by probing each unit Bloom filter present in the list. Thus, $x \notin A$ can be returned as a false positive if any of the unit Bloom filters returns a positive result for it; and only when none of the unit Bloom filters return true for a query, can the membership query in $\mathcal{BF_{D}}(A)$ be answered in the negative. Thus,

\[ P\big( x \notin \mathcal{BF_{D}}(A) |\ x \notin A \big)  =  (1 - f_{unit})^{\lfloor \alpha \rfloor}\times ( 1 - f_{last})\]
\[ P\big( x \notin \mathcal{BF_{D}}(A) |\ x \notin A \big)  \leq  (1 - f_{unit})^{\lfloor \alpha \rfloor}\]

\[ \implies [ P\big( x \in \mathcal{BF_{D}}(A) |\ x \notin A \big)  \geq  1 -  (1 - f_{unit})^{\lfloor \alpha \rfloor}\]
\[ f_D \geq \lfloor \alpha \rfloor f_{unit} \quad [\because\ ( 1 - x)^y \approx 1 - xy] \]
\[ f_D \geq \lfloor \alpha \rfloor f_{t}\]
\end{proof}

\end{document}